\newtheorem{theorem}{Theorem}[section]
\newtheorem{remark}[theorem]{Remark}
\newenvironment{proof}{\textbf{Proof. }}{\hfill{$\square$}}
\begin{document}
\begin{frontmatter}
\title{Stabilizing Unstable Periodic Orbits with Delayed Feedback Control \\ in Act-and-Wait Fashion} 
\author[First]{Ahmet Cetinkaya}
\author[Second]{Tomohisa Hayakawa}
\author[Second]{Mohd Amir Fikri bin Mohd Taib}

\address[First]{Department of Computer Science,  \\ \vskip 1pt Tokyo Institute of Technology, Yokohama  226-8502, Japan} 
\address[Second]{Department of Systems and Control Engineering, \\ \vskip 1pt Tokyo Institute of Technology, Tokyo 152-8552, Japan \\ \vskip 3pt \textup{\texttt{ahmet@sc.dis.titech.ac.jp, hayakawa@sc.e.titech.ac.jp, amir@dsl.mei.titech.ac.jp}}}

\begin{abstract}
A delayed feedback control framework for stabilizing unstable periodic orbits of linear periodic time-varying systems is proposed. In this framework, act-and-wait approach is utilized for switching a delayed feedback controller on and off alternately at every integer multiples of the period of the system. By analyzing the monodromy matrix of the closed-loop system, we obtain conditions under which   the closed-loop system's state converges towards a periodic solution under our proposed control law. We discuss the application of our results in stabilization of unstable periodic orbits of nonlinear systems and present numerical examples to illustrate the efficacy of our approach. 
\end{abstract}

\begin{keyword} 
Stabilization of periodic orbits, periodic systems, time-varying systems, delayed feedback stabilization
\end{keyword}

\end{frontmatter} 

\section{Introduction}

Stabilization of unstable periodic orbits of nonlinear systems using
delayed feedback control was first explored in \cite{pyragas:ccc}.
In the delayed feedback control scheme, the difference between the
current state and the delayed state is utilized as a control input
to stabilize an unstable orbit. The delay time is set to correspond
to the period of the orbit to be stabilized so that the control input
vanishes when the stabilization is achieved. 

Delayed feedback controllers have been used in many studies for stabilization
of the periodic orbits of both continuous- and discrete-time nonlinear
systems (see, e.g., \cite{harrington:drd,tian2005survey,hovel2010control},
and the references therein). More recently, \cite{pyragas2013adaptive}
investigated delayed feedback control of nonlinear systems that are
subject to noise, \cite{fiedler2016delayed} explored delayed feedback
control of a delay differential equation, and \cite{ichinose2014delayed}
utilized delayed feedback control for stabilizing quasi periodic orbits.
The work \cite{pyragas2015relation} studied the relation between
the delayed feedback control approach and the harmonic oscillator-based
control methods for stabilizing periodic orbits in chaotic systems
\cite{olyaei2015controlling}. Furthermore, \cite{novivcenko2012phase}
and \cite{purewal2014effect} explored the situation where the period
of the orbit and the delay time in the delayed feedback controller
do not match due to imperfect information about the periodic orbit
or inaccuracies in the implementation of the controller. 

The physical structure of delayed feedback control scheme is simple.
However, the analysis of the closed-loop system is difficult. This
is due to the fact that to investigate the system under delayed feedback
control, one has to deal with delay-differential equations, the state
space of which is infinite dimensional. To deal with the difficulties
in the analysis of delay differential equations, an approach is to
use approximation techniques (see, for instance, \cite{ma2003} and
\cite{butcher2004}). Another approach was taken in \cite{insperger:aaw}.
There, stabilization of a linear time-invariant system with a time-delay
controller was considered, and ``act-and-wait'' concept was introduced.
This concept is characterized by alternately applying and cutting
off the controller in finite intervals. It is shown in \cite{insperger:aaw}
that by utilizing the act-and-wait concept, one may be able to derive
a finite-sized monodromy matrix for the closed-loop system, which
can then be used for stability analysis. Act-and-wait concept has
been extended to discrete-time systems in \cite{insperger:acd}, and
tested through experiments in \cite{insperger:iad}. Furthermore,
act-and-wait approach has been used together with delayed feedback
control in \cite{konishi2011delayed} for stabilizing unstable fixed
points of nonlinear systems, and more recently in \cite{pyragas2016}
for stabilizing unstable periodic orbits of nonautonomous nonlinear
systems. 

In this paper, we explore the stabilization of periodic solutions
to linear periodic systems with an act-and-wait-fashioned delayed
feedback control framework. ~In this framework, a switching mechanism
is utilized to turn the delayed feedback controller on and off alternately
at every integer multiple of the period of a given linear periodic
system. Act-and-wait scheme allows us to obtain the monodromy matrix
associated with the closed-loop system under our proposed controller.
We then use the obtained monodromy matrix for obtaining conditions
under which the closed-loop system's state converges to a periodic
solution. Our main motivation for studying a delayed feedback control
problem for periodic systems stems from our desire to analyze the
stability of a periodic orbit of a nonlinear system under delayed
feedback control. In this paper we apply our results for linear periodic
systems in analyzing periodic linear variational equations obtained
after linearizing nonlinear systems (under delayed feedback control)
around periodic trajectories corresponding to periodic orbits. The
uncontrolled nonlinear systems that we consider are autonomous and
as a result their stability assessment under the act-and-wait-fashioned
delayed feedback controller differs from the nonautonomous case discussed
in \cite{pyragas2016}. We also note that our delayed feedback control
approach and therefore our analysis techniques differ from those in
earlier works on stabilization of linear periodic systems where researchers
have employed Gramian-based controllers \cite{montagnier2004}, periodic
Lyapunov functions \cite{zhou2012}, and linear matrix inequalities
\cite{de2000lmi}.

The paper is organized as follows. In Section~\ref{sec:mainresults},
we introduce our act-and-wait-fashioned delayed feedback control framework
for stabilizing periodic solutions of linear periodic systems; we
present a method for assessing the asymptotic stability of a periodic
solution of the closed-loop system under our proposed framework. Furthermore,
in Section~\ref{sec:Stabilization-of-Unstable} we discuss an application
of our results in stabilizing unstable periodic orbits of nonlinear
systems. We present illustrative numerical examples in Section~\ref{sec:Illustrative-Numerical-Example}.
Finally, we conclude our paper in Section~\ref{sec:Conclusion}. 

We note that a preliminary version of this work was presented in \cite{amirhayakawacetinkaya2013}.
In this paper, we provide additional discussions and examples. 

\section{Delayed Feedback Stabilization of Periodic Orbits}

\label{sec:mainresults}

In this section, we provide the mathematical model for a linear periodic
time-varying system and introduce a new delayed feedback control framework
based on act-and-wait approach. We then characterize a method for
evaluating convergence of state trajectories of a closed-loop linear
time-varying periodic system towards a periodic solution. 

\subsection{Linear Periodic Time-Varying System \label{subsec:Linear-Periodic-Time-Varying}}

Consider the linear periodic time-varying system 
\begin{align}
\dot{x}(t) & =A(t)x(t)+B(t)u(t),\quad x(t_{0})=x_{0},\quad t\geq t_{0},\label{eq:system}
\end{align}
where $x(t)\in\mathbb{R}^{n}$ is the state vector, $u(t)\in\mathbb{R}^{m}$
is the control input, and $A(t)\in\mathbb{R}^{n\times n}$ and $B(t)\in\mathbb{R}^{n\times m}$
are periodic matrices with period $T>0$, that is, $A(t+T)=A(t)$
and $B(t+T)=B(t)$, $t\geq t_{0}$. For simplicity of exposition,
we assume $t_{0}=0$ for the rest of the discussion because the case
where $t_{0}\neq0$ can be similarly handled. Furthermore, we assume
that the uncontrolled ($u(t)\equiv0$) dynamics possess a periodic
solution $x(t)\equiv x^{*}(t)$ with period $T$ satisfying $x^{*}(t+T)=x^{*}(t),\,t\geq0$.
It follows from Floquet's theorem that there exists such a periodic
solution to the uncontrolled system \eqref{eq:system} of period $T$
if and only if there exists a nonsingular matrix $C\in\mathbb{R}^{n\times n}$
possessing $1$ in its spectrum such that $V(t+T)=V(t)C$, where $V(t)$
denotes a fundamental matrix of the uncontrolled system \eqref{eq:system}.
Moreover, note that since $x(t)\equiv x^{*}(t)$ is a $T$-periodic
solution of the uncontrolled system \eqref{eq:system}, $x(t)\equiv\alpha x^{*}(t)$
is also a $T$-periodic solution for all $\alpha\in\mathbb{R}$, that
is, $x(t)\equiv\alpha x^{*}(t)$ satisfies \eqref{eq:system} with
$u(t)\equiv0$. 

We investigate the asymptotic stability of periodic solutions of the
closed-loop system \eqref{eq:system} under the delayed feedback control
input 
\begin{align}
u(t) & =-g(t)F(x(t)-x(t-T)),\label{eq:control-input}
\end{align}
where $F\in\mathbb{R}^{m\times n}$ is a constant gain matrix and
\begin{align}
g(t) & \triangleq\left\{ \begin{array}{rl}
0, & 2kT\leq t<(2k+1)T,\\
1, & (2k+1)T\leq t<2(k+1)T,
\end{array}\right.\,k\in\mathbb{N}_{0}.\label{eq:gdef}
\end{align}
is a time-varying function that switches the controller on and off
alternately at every integer multiples of the period $T$. 

Note that the feedback term characterized in \eqref{eq:control-input}
vanishes after the periodic solution is stabilized. Specifically,
for $x(t)\equiv x^{*}(t)$, we have $u(t)=0$, $t\geq0$, since $x(t)=x(t-T)$. 

We remark that our control approach is a specific case of the act-and-wait
approach introduced in \cite{insperger:aaw}. In particular, in our
control law \eqref{eq:control-input}, both the acting and the waiting
durations have length $T$. Specifically, in every $2T$ period, the
controller first waits for a duration of length $T$, and then acts
for a duration of length $T$. Note that the controllers in \cite{insperger:aaw}
are more general in the sense that acting and waiting times need not
be equal. In Section~\ref{sec:Illustrative-Numerical-Example}, we
also consider different switching functions $g(t)$ that lead to different
acting and waiting times. 

The reason why $g(t)$ is set to be a time-varying function can be
understood if we compare it to the case where $g(t)$ is constant.
For instance, if $g(t)\equiv1$ in \eqref{eq:control-input}, then
\eqref{eq:system} becomes 
\begin{align}
\dot{x}(t) & =(A(t)-B(t)F)x(t)+B(t)Fx(t-T),\label{eq:dde}
\end{align}
which is a delay-differential equation. Analysis of the solution of
\eqref{eq:dde} is difficult, as the state space associated with \eqref{eq:dde}
is infinite-dimensional.

On the other hand, for the linear periodic system 
\begin{align}
\dot{x}(t) & =A(t)x(t),\quad A(t)=A(t+T),\label{eq:controlfreesystem}
\end{align}
where there are no delay terms, stability of an equilibrium solution
can be assessed by analyzing the corresponding monodromy matrix. Let
$\Phi(\cdot,\cdot)$ denote the state-transition matrix of \eqref{eq:controlfreesystem}.
The monodromy matrix associated with the $T$-periodic system \eqref{eq:controlfreesystem}
is given by $\Phi(T,0)\in\mathbb{R}^{n\times n}$. The eigenvalues
of the monodromy matrix, known as the Floquet multipliers, are essential
in the analysis of the long-term behavior of the state-transition
matrix of \eqref{eq:controlfreesystem}, because 
\begin{align}
\Phi(t+kT,0) & =\Phi(t,0)\Phi^{k}(T,0),\quad k\in\mathbb{N}_{0}.\label{eq:phiaproperty}
\end{align}
Moreover, the state of the periodic system \eqref{eq:controlfreesystem}
satisfies 
\begin{align*}
x((k+1)T) & =\Phi x(kT),\quad k\in\mathbb{N}_{0}.
\end{align*}
 Observe that if $g(t)\equiv1$ in \eqref{eq:control-input}, we would
not be able to find a homogeneous expression in the form of \eqref{eq:controlfreesystem},
let alone find a corresponding ``monodromy matrix'', because of
the existence of the delay term. 

However, in our case, following the act-and-wait approach, we define
$g(t)$ as in \eqref{eq:gdef} as a switching function. Consequently,
we are able to construct a monodromy matrix $\Lambda\in\mathbb{R}^{n\times n}$
for the closed-loop system \eqref{eq:system}, \eqref{eq:control-input}
with the \emph{doubled period} $2T$ such that 
\begin{align}
x(2(k+1)T) & =\Lambda x(2kT),\quad k\in\mathbb{N}_{0}.\label{eq:monodromy-equation}
\end{align}
Note that the spectrum of the monodromy matrix $\Lambda$ characterizes
long-term behavior of the state trajectory. 

In the following sections, we first derive the monodromy matrix, and
then we present conditions for the convergence of the state trajectory
towards a periodic solution of the  closed-loop system \eqref{eq:system},
\eqref{eq:control-input}. 

\subsection{Monodromy Matrix\label{subsec:Monodromy-Matrix} }

In this section, we obtain the monodromy matrix associated with the
closed-loop system given by \eqref{eq:system}, \eqref{eq:control-input}.
In our derivations, we use $\Phi(\cdot,\cdot)$ to denote the state-transition
matrix associated with \eqref{eq:controlfreesystem}. Furthermore,
let $\Upsilon(\cdot,\cdot)$ denote the state-transition matrix for
the linear $T$-periodic system 
\begin{align}
\dot{x}(t) & =(A(t)-B(t)F)x(t).\label{eq:delayfreecontrolsystem}
\end{align}

Now, let $\mathcal{T}_{0}(k)\triangleq[2kT,(2k+1)T)$, $\mathcal{T}_{1}(k)\triangleq[(2k+1)T,2(k+1)T)$,
$k\in\mathbb{N}_{0}$. Note that when $t\in\mathcal{T}_{1}(k)$, the
controller is on, that is, $g(t)=1$. Hence, it follows from \eqref{eq:system}
and \eqref{eq:control-input} that for $t\in\mathcal{T}_{1}(k)$,
\begin{align}
\dot{x}(t) & =(A(t)-B(t)F)x(t)+B(t)Fx(t-T).\label{eq:usegeq1}
\end{align}
Observe that for $t\in\mathcal{T}_{1}(k)$, we have $t-T\in\mathcal{T}_{0}(k)$.
Since the controller is turned off during the interval $\mathcal{T}_{0}(k)$,
the evolution of the state in this interval is described by \eqref{eq:controlfreesystem}
corresponding to the uncontrolled dynamics. Therefore, $x(t-T)$ can
be expressed by 
\begin{align}
x(t-T) & =\Phi(t-T,2kT)x(2kT),\quad t\in\mathcal{T}_{1}(k).\label{eq:xtminusT}
\end{align}
Now, by using \eqref{eq:usegeq1} and \eqref{eq:xtminusT}, we obtain
\begin{align}
\dot{x}(t) & =(A(t)-B(t)F)x(t)\nonumber \\
 & \quad+B(t)F\Phi(t-T,2kT)x(2kT),\quad t\in\mathcal{T}_{1}(k).\label{eq:xdiffwithbtf}
\end{align}
By multiplying both sides of \eqref{eq:xdiffwithbtf} from left with
the matrix $\Upsilon^{-1}(t,(2k+1)T)$, we obtain 
\begin{align}
 & \Upsilon^{-1}(t,(2k+1)T)\dot{x}(t)\nonumber \\
 & \,\,=\Upsilon^{-1}(t,(2k+1)T)(A(t)-B(t)F)x(t)\nonumber \\
 & \,\,\quad+\Upsilon^{-1}(t,(2k+1)T)B(t)F\Phi(t-T,2kT)x(2kT).\label{eq:upsiloninverse}
\end{align}
Since $\Upsilon^{-1}(t,(2k+1)T)(A(t)-B(t)F)=-\frac{\mathrm{d}}{\mathrm{d}t}\Upsilon^{-1}(t,(2k+1)T)$,
we have 
\begin{align}
 & \Upsilon^{-1}(t,(2k+1)T)\dot{x}(t)\nonumber \\
 & \quad-\Upsilon^{-1}(t,(2k+1)T)(A(t)-B(t)F)x(t)\nonumber \\
 & =\Upsilon^{-1}(t,(2k+1)T)\dot{x}(t)+\frac{\mathrm{d}}{\mathrm{d}t}\Upsilon^{-1}(t,(2k+1)T)x(t)\nonumber \\
 & =\frac{\mathrm{d}}{\mathrm{d}t}\big(\Upsilon^{-1}(t,(2k+1)T)x(t)\big).\label{eq:derivationparts}
\end{align}
It follows from \eqref{eq:upsiloninverse} and \eqref{eq:derivationparts}
that 
\begin{align}
 & \frac{\mathrm{d}}{\mathrm{d}t}\big(\Upsilon^{-1}(t,(2k+1)T)x(t)\big)\nonumber \\
 & \quad=\Upsilon^{-1}(t,(2k+1)T)B(t)F\Phi(t-T,2kT)x(2kT).\label{eq:differential}
\end{align}
Next, we integrate both sides of \eqref{eq:differential} over the
interval $[(2k+1)T,t)$ to obtain 
\begin{align}
 & \Upsilon^{-1}(t,(2k+1)T)x(t)\nonumber \\
 & \,\,=\Upsilon^{-1}((2k+1)T,(2k+1)T)x((2k+1)T)\nonumber \\
 & \,\,\quad+\Big(\int_{(2k+1)T}^{t}\Upsilon^{-1}(s,(2k+1)T)B(s)F\Phi(s-T,2kT)\mathrm{d}s\Big)\nonumber \\
 & \,\,\quad\cdot x(2kT).
\end{align}
Noting that $\Upsilon^{-1}((2k+1)T,(2k+1)T)=I_{n}$ and $x((2k+1)T)=\Phi((2k+1)T,2kT)x(2kT)$,
we obtain 
\begin{align}
 & x(t)=\Upsilon(t,(2k+1)T)\Phi((2k+1)T,2kT)x(2kT)\nonumber \\
 & \quad+\Upsilon(t,(2k+1)T)\nonumber \\
 & \quad\cdot\Big(\int_{(2k+1)T}^{t}\Upsilon^{-1}(s,(2k+1)T)B(s)F\Phi(s-T,2kT)\mathrm{d}s\Big)\nonumber \\
 & \quad\cdot x(2kT).\label{eq:integral-equation}
\end{align}
Since both \eqref{eq:controlfreesystem} and \eqref{eq:delayfreecontrolsystem}
are $T$-periodic, we have $\Upsilon(t,(2k+1)T)=\Upsilon(t-(2k+1)T,0)$
and $\Phi((2k+1)T,2kT)=\Phi(T,0)$. Furthermore, since $\Upsilon(t,(2k+1)T)=\Upsilon(t,s)\Upsilon(s,(2k+1)T)$,
$s\in[(2k+1)T,t]$, we have $\Upsilon(t,(2k+1)T)\Upsilon^{-1}(s,(2k+1)T)=\Upsilon(t,s)$.
Consequently, it follows from \eqref{eq:integral-equation} that 
\begin{align}
 & x(t)=\Big(\Upsilon(t-(2k+1)T,0)\Phi(T,0)\nonumber \\
 & \,\,+\int_{(2k+1)T}^{t}\Upsilon(t,s)B(s)F\Phi(s-T,2kT)\mathrm{d}s\Big)x(2kT),\label{eq:integral-equation-1}
\end{align}
 for $t\in\mathcal{T}_{1}(k)$. We now change the variable of the
integral term in \eqref{eq:integral-equation-1} by setting $\bar{s}=s-2kT$.
As a result, we obtain 
\begin{align}
 & \int_{(2k+1)T}^{t}\Upsilon(t,s)B(s)F\Phi(s-T,2kT)\mathrm{d}s\nonumber \\
 & \quad=\int_{T}^{t-2kT}\Upsilon(t,\bar{s}+2kT)B(\bar{s}+2kT)\nonumber \\
 & \quad\quad\cdot F\Phi(\bar{s}+2kT-T,2kT)\mathrm{d}\bar{s},\quad t\in\mathcal{T}_{1}(k).\label{eq:firstintegralresult}
\end{align}
By $T$-periodicity of \eqref{eq:controlfreesystem} and \eqref{eq:delayfreecontrolsystem},
we have $\Upsilon(t,\bar{s}+2kT)=\Upsilon(t-2kT,\bar{s})$ and $\Phi(\bar{s}+2kT-T,2kT)=\Phi(\bar{s}-T,0)$.
Moreover, $B(\bar{s}+2kT)=B(\bar{s})$, since $B(\cdot)$ is a $T$-periodic
matrix function. It then follows from \eqref{eq:firstintegralresult}
that 
\begin{align}
 & \int_{(2k+1)T}^{t}\Upsilon(t,s)B(s)F\Phi(s-T,2kT)\mathrm{d}s\nonumber \\
 & \quad=\int_{T}^{t-2kT}\Upsilon(t-2kT,\bar{s})B(\bar{s})F\Phi(\bar{s}-T,0)\mathrm{d}\bar{s}.\label{eq:integrationresult}
\end{align}
 Now, from \eqref{eq:integral-equation-1} and \eqref{eq:integrationresult},
we obtain 
\begin{align}
x(t) & =\Big(\Upsilon(t-(2k+1)T,0)\Phi(T,0)\nonumber \\
 & \quad+\int_{T}^{t-2kT}\Upsilon(t-2kT,\bar{s})B(\bar{s})F\Phi(\bar{s}-T,0)\mathrm{d}\bar{s}\Big)\nonumber \\
 & \quad\cdot x(2kT),\quad t\in\mathcal{T}_{1}(k).\label{eq:xtfinalversion}
\end{align}
 By continuity of the state, we can compute $x(2(k+1)T)$ by using
\eqref{eq:xtfinalversion}. Specifically, we set $t=2(k+1)T$ in \eqref{eq:xtfinalversion}
and obtain \eqref{eq:monodromy-equation}, where the monodromy matrix
$\Lambda$ is given by 
\begin{align}
\Lambda & =\Big(\Upsilon(T,0)\Phi(T,0)\nonumber \\
 & \quad+\int_{T}^{2T}\Upsilon(2T,\bar{s})B(\bar{s})F\Phi(\bar{s}-T,0)\mathrm{d}\bar{s}\Big).\label{eq:monodromymatrix}
\end{align}
 Notice that \eqref{eq:monodromy-equation} characterizes the evolution
of the state at times $t=2kT$, $k\in\mathbb{N}_{0}$. Consequently,
stability of the equilibrium solutions of the closed-loop system \eqref{eq:system},
\eqref{eq:control-input} can be deduced through the eigenvalues of
the monodromy matrix. 

Moreover, note that $x^{*}(t)$ satisfies $x^{*}(2T)=x^{*}(0)$. In
addition, from \eqref{eq:monodromy-equation} we have 
\begin{align}
x^{*}(2T) & =\Lambda x^{*}(0).
\end{align}
It follows that $x^{*}(0)=\Lambda x^{*}(0)$, and hence, the monodromy
matrix $\Lambda$ associated with the closed-loop system \eqref{eq:system},
\eqref{eq:control-input} possesses $1$ as an eigenvalue with the
eigenvector $x^{*}(0)$. Note that both the algebraic and the geometric
multiplicity of the eigenvalue $1$ may be greater than $1$. 

Let $\kappa\in\{1,2,\ldots,n\}$ denote the algebraic multiplicity
of the eigenvalue $1$. We represent generalized eigenvectors of the
monodromy matrix $\Lambda$ by vectors $v_{1},v_{2},\ldots,v_{n}\in\mathbb{C}^{n}$
out of which $v_{1},v_{2}\ldots,v_{\kappa}\in\mathbb{R}^{n}$ denote
the generalized eigenvectors associated with the eigenvalue $1$.
The generalized eigenvectors $v_{1},v_{2},\ldots,v_{n}$ are linearly
independent \citep{bernstein2011matrix}, and hence form a basis for
$\mathbb{C}^{n}$, that is, for any $y\in\mathbb{C}^{n}$, there exist
$\alpha_{1}$, $\alpha_{2}$, $\ldots$, $\alpha_{n-1}$, $\alpha_{n}\in\mathbb{C}$
such that $y=\sum_{i=1}^{n}\alpha_{i}v_{i}$. Note that $\alpha_{i}v_{i}$
characterizes the component of $y$ along $v_{i}$ (the $i$th element
of the basis). Note also that linear independence of vectors $v_{1},v_{2},\ldots,v_{n}$
guarantees that the constants $\alpha_{1},\alpha_{2},\ldots,\alpha_{n}$
are uniquely determined. 

The long-term behavior of the state trajectory is determined by the
spectrum of the monodromy matrix $\Lambda$. In Theorem~\ref{theo1}
below, we present the condition for the convergence of the state trajectory
towards a periodic solution. 

\begin{theorem}\label{theo1} Consider the linear time-varying periodic
system \eqref{eq:system} with the periodic solution $x^{*}(t)$.
Let the initial condition be given by $x(0)=x_{0}\triangleq\sum_{i=1}^{n}\alpha_{i}v_{i}$,
where $v_{1},v_{2},\ldots,v_{n}\in\mathbb{C}^{n}$ are the generalized
eigenvectors of the monodromy matrix $\Lambda\in\mathbb{R}^{n\times n}$,
and $\alpha_{1},\alpha_{2},\ldots,\alpha_{n}\in\mathbb{C}$. Suppose
that $1$ is a semisimple eigenvalue of $\Lambda$. Let $\kappa\in\{1,2,\ldots,n\}$
denote the algebraic multiplicity of the eigenvalue $1$ associated
with the eigenspace spanned by the eigenvectors $v_{1},v_{2}\ldots,v_{\kappa}\in\mathbb{R}^{n}$.
If all the eigenvalues, other than the eigenvalue $1$ of the monodromy
matrix $\Lambda$ are strictly inside the unit circle of the complex
plane, then $\lim_{k\to\infty}x(2kT)=\sum_{i=1}^{\kappa}\alpha_{i}v_{i}$.
\end{theorem}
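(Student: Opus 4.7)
The plan is to iterate the monodromy equation \eqref{eq:monodromy-equation} and split the resulting expression along the generalized eigenbasis of $\Lambda$, isolating a persistent part along the $1$-eigenspace from a geometrically decaying part along everything else.

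First, induction on $k$ applied to \eqref{eq:monodromy-equation} gives $x(2kT) = \Lambda^k x_0$, and substituting the expansion $x_0 = \sum_{i=1}^n \alpha_i v_i$ together with linearity of $\Lambda^k$ yields $x(2kT) = \sum_{i=1}^\kappa \alpha_i \Lambda^k v_i + \sum_{i=\kappa+1}^n \alpha_i \Lambda^k v_i$. Semisimplicity of the eigenvalue $1$ is used precisely here: it guarantees that $v_1,\ldots,v_\kappa$ are ordinary (not merely generalized) eigenvectors, so $\Lambda v_i = v_i$ and $\Lambda^k v_i = v_i$. The first sum therefore reduces to $\sum_{i=1}^\kappa \alpha_i v_i$, independently of $k$.

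The remaining task is to show that $\Lambda^k v_i \to 0$ for each $i > \kappa$. Such a $v_i$ is a generalized eigenvector attached to some $\lambda_i \in \mathrm{spec}(\Lambda)\setminus\{1\}$ with $|\lambda_i| < 1$, so $(\Lambda - \lambda_i I_n)^{m_i} v_i = 0$ for some $m_i \geq 1$. Writing $\Lambda = \lambda_i I_n + (\Lambda - \lambda_i I_n)$ and applying the binomial theorem (the two summands commute) truncates to the finite expansion $\Lambda^k v_i = \sum_{j=0}^{m_i-1} \binom{k}{j}\lambda_i^{k-j}(\Lambda - \lambda_i I_n)^j v_i$. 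Since $|\lambda_i|<1$, the geometric factor $\lambda_i^{k-j}$ dominates the polynomial growth $\binom{k}{j}$, so each term tends to $0$ and hence $\Lambda^k v_i \to 0$. Putting the two pieces together delivers $\lim_{k\to\infty}x(2kT) = \sum_{i=1}^\kappa \alpha_i v_i$.

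The principal subtlety is exactly this last step: handling Jordan chains attached to eigenvalues of modulus strictly less than $1$, rather than ordinary eigenvectors. The binomial identity above is the most self-contained route, but the same conclusion could equivalently be obtained by passing to a Jordan normal form $\Lambda = PJP^{-1}$ and observing that every non-identity Jordan block with spectral radius strictly less than $1$ satisfies $\|J^k\| \to 0$, so that $\Lambda^k$ converges to the spectral projection onto the $1$-eigenspace along the sum of the remaining generalized eigenspaces. Everything else is bookkeeping that rests on \eqref{eq:monodromy-equation} and the basis expansion of $x_0$.
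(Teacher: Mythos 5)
Your proposal is correct and follows essentially the same route as the paper: expand $x_0$ in the generalized eigenbasis of $\Lambda$, use semisimplicity of the eigenvalue $1$ to fix the component in its eigenspace, and show the remaining components decay because the other eigenvalues lie strictly inside the unit circle. The paper packages the decay step via the Jordan form $J=P^{-1}\Lambda P$ and blockwise limits of $J_i^k$, while you argue vector-by-vector with the truncated binomial expansion, but these are the same argument in different notation.
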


\begin{proof} First, we define $P\triangleq\left[v_{1},v_{2},\ldots,v_{n}\right]$.
Note that since the columns of the matrix $P$ given by $v_{1}$,
$v_{2}$, $\ldots$, $v_{n}$ are linearly independent, it follows
that $P$ is nonsingular. 

Furthermore, note that with the similarity transformation $J\triangleq P^{-1}\Lambda P$,
we obtain the Jordan form\footnote{Here without loss of generality
we are considering the case where in the construction of $P$, the
generalized eigenvectors associated with each eigenvalue are next
to each other and they are in the same order that they appear in the
Jordan chain (see \cite{bernstein2011matrix}) associated with that
eigenvalue. Generalized eigenvectors can always be reordered in this
way to obtain a Jordan form.} of the monodromy matrix $\Lambda$
such that $J=\mathrm{diag}[J_{1},J_{2},\ldots,J_{r}]$, where $r\in\mathbb{N}$
denotes the number of Jordan blocks, which is also equal to the sum
of the geometric multiplicities of the eigenvalues of the monodromy
matrix $\Lambda$. The Jordan blocks $J_{i}\in\mathbb{C}^{n_{i}\times n_{i}}$,
$i\in\{1,2,\ldots,r\}$, have the form $J_{i}=\lambda_{i}I_{n_{i}}+N_{i}$,
where $\lambda_{i}\in\mathbb{C}$ is an eigenvalue of the monodromy
matrix $\Lambda$ and $N_{i}\in\mathbb{R}^{n_{i}\times n_{i}}$ is
a nilpotent matrix of degree $n_{i}$. 

We use \eqref{eq:monodromy-equation}, the definition for the Jordan
form $J$ of the monodromy matrix $\Lambda$, and $[\alpha_{1},\alpha_{2},\ldots,\alpha_{n}]^{\mathrm{T}}=P^{-1}x_{0}$
to obtain 
\begin{align}
x(2kT) & =\Lambda^{k}x_{0}\nonumber \\
 & =PJ^{k}P^{-1}x_{0}\nonumber \\
 & =P\mathrm{diag}[J_{1}^{k},J_{2}^{k},\ldots,J_{r}^{k}][\alpha_{1},\alpha_{2},\ldots,\alpha_{n}]^{\mathrm{T}}.\label{eq:xk2Twithjordan}
\end{align}
Note that the eigenvalue $1$, which has algebraic multiplicity $\kappa\in\{1,2,\ldots,n\}$,
is a semisimple eigenvalue associated with the eigenvectors $v_{1},v_{2}\ldots,v_{\kappa}\in\mathbb{R}^{n}$.
As a result, $J_{i}=1$, $i\in\{1,2,\ldots,\kappa\}$, and hence $\lim_{k\to\infty}J_{i}^{k}=1$,
$i\in\{1,2,\ldots,\kappa\}$. On the other hand, for each $i\in\{\kappa+1,\kappa+2,\ldots,r\}$
the eigenvalue $\lambda_{i}$ is strictly inside the unit circle;
therefore, $\lim_{k\to\infty}J_{i}^{k}=0$, $i\in\{\kappa+1,\kappa+2,\ldots,r\}$.
Thus, $\lim_{k\to\infty}J^{k}$ takes the form of a diagonal matrix
with $1$ as the first $\kappa$ diagonal entries and $0$ elsewhere.
It follows from \eqref{eq:xk2Twithjordan} that 
\begin{align}
\lim_{k\to\infty}x(2kT) & =P\mathrm{diag}[\overbrace{1,\ldots,1}^{\kappa\,\mathrm{terms}},\overbrace{0,\ldots,0}^{n-\kappa\,\mathrm{terms}}]\nonumber \\
 & \quad\cdot[\alpha_{1},\alpha_{2},\ldots,\alpha_{n}]^{\mathrm{T}}\nonumber \\
 & =P[\alpha_{1},\ldots,\alpha_{\kappa},0,\ldots,0]^{\mathrm{T}}\nonumber \\
 & =\sum_{i=1}^{\kappa}\alpha_{i}v_{i},
\end{align}
 which completes the proof. \end{proof}

Under the condition in Theorem~\ref{theo1} that all the eigenvalues,
other than the semisimple eigenvalue $1$, of the monodromy matrix
$\Lambda$ are strictly inside the unit circle, the state evaluated
at integer multiples of the doubled period $2T$ converges to a point
on a periodic solution of the uncontrolled ($u(t)\equiv0$) system
\eqref{eq:system}. Hence, the state trajectory  converges towards
a periodic solution. The location of the limiting periodic solution
depends on the initial condition $x_{0}$. Specifically, the state
evaluated at integer multiples of the doubled period $2T$ converges
to the point given by $\sum_{i=1}^{\kappa}\alpha_{i}v_{i}$, where
$\alpha_{i}v_{i}$ characterizes the component of the initial state
$x_{0}$ along the eigenvector $v_{i}$ associated with the eigenvalue
$1$ of the monodromy matrix $\Lambda$. Note that if the algebraic
multiplicity of the eigenvalue $1$ is $\kappa=1$, then the limiting
periodic solution is given by $\alpha_{1}x^{*}(t)$, where $\alpha_{1}x^{*}(0)$
is the component of $x_{0}$ along the eigenvector $v_{1}=x^{*}(0)$. 

\section{Stabilization of Unstable Periodic Orbits of Nonlinear Systems }

\label{sec:Stabilization-of-Unstable}

We now employ the results obtained for linear periodic systems for
stabilizing an unstable periodic orbit of a nonlinear system. 

Consider the nonlinear system given by 
\begin{equation}
\dot{x}(t)=f(x(t))+u(t),\quad x(0)=x_{0},\quad t\geq0,\label{eq:nonl-system}
\end{equation}
where $x(t)\in\mathbb{R}^{n}$ is the state vector, $u(t)\in\mathbb{R}^{n}$
is the control input, and $f:\mathbb{R}^{n}\to\mathbb{R}^{n}$ is
a nonlinear function. Suppose that the uncontrolled ($u(t)\equiv0$)
system \eqref{eq:nonl-system} possesses a periodic solution $x(t)\equiv x^{*}(t)$
with a known period $T>0$ such that 
\begin{align}
x^{*}(t+T) & =x^{*}(t),\label{eq:upo}\\
\dot{x}^{*}(t) & =f(x^{*}(t)),\quad t\in[0,\infty).\label{eq:upoproperty2}
\end{align}
 The periodic orbit associated with the periodic solution $x^{*}(\cdot)$
is given by $\mathcal{O}\triangleq\{x^{*}(t)\colon t\in[0,T)\}$.
The stability of the periodic orbit $\mathcal{O}\subset\mathbb{R}^{n}$
is characterized through the stability of a fixed point of a Poincar\'e
map defined on an $(n-1)$-dimensional hypersurface that is transversal
to the periodic orbit (see \cite{guckenheimer2002,chicone2006}).
In this paper, we consider the case where $\mathcal{O}$ is an unstable
periodic orbit (UPO) and discuss its stabilization. 

There are several methods known for stabilizing the UPO. One of them
is the Pyragas-type delayed feedback control framework (see \cite{pyragas:ccc,tian2005survey,hovel2010control}).
In this framework the control input is given by 
\begin{equation}
u(t)=-F(x(t)-x(t-T)),\label{eq:pyragas-type-control}
\end{equation}
where $F\in\mathbb{R}^{n\times n}$ is the gain matrix of the controller.
The control input is computed based on the difference between the
current state and the delayed state. The delay time is set to correspond
to the period $T$ of the desired UPO so that the control input vanishes
after the UPO is stabilized. In the delayed feedback control method,
the controller uses the delayed state instead of the UPO as a reference
signal to which the current state is desired to be stabilized. Therefore
this method does not require a preliminary calculation of the UPO
if its period is given. 

The analysis of the closed-loop system under delayed feedback controller
is difficult, because the closed-loop dynamics is described by a delay-differential
equation \eqref{eq:nonl-system}, \eqref{eq:pyragas-type-control},
the state space of which is infinite-dimensional. This fact is our
motivation for employing the act-and-wait-fashioned delayed feedback
control law \eqref{eq:control-input}, since the closed-loop system
system \eqref{eq:control-input}, \eqref{eq:nonl-system} can be analyzed
by utilizing the methods that we developed in Section~\ref{sec:mainresults}. 

\begin{remark} Act-and-wait-fashioned delayed-feedback control laws
were previously used in \cite{konishi2011delayed} and \cite{pyragas2016}
for different problem settings. In \cite{konishi2011delayed}, the
stabilization of a fixed-point is considered. Furthermore, in \cite{pyragas2016},
periodic orbit stabilization is considered for a nonautonomous system.
Specifically, the uncontrolled system in \cite{pyragas2016} is affected
by an external periodic force, which induces the periodic orbit. The
controller in \cite{pyragas2016} is designed so that all Floquet
multipliers of the linearized system are strictly inside the unit
circle of the complex plane. In our case, the uncontrolled system
is not driven by a periodic force and it is autonomous. The periodic
orbit in our case is embedded in the dynamics. The stability assessment
method in this paper differs from that in \cite{pyragas2016} due
to the difference in the analysis of autonomous and nonautonomous
systems (see Section 7.1.3 of \cite{leine2013dynamics}). In particular,
as we discuss below, the linearized system in our case always possesses
$1$ as a Floquet multiplier regardless of the choice of the feedback
gain matrix, and moreover, the stability of the periodic orbit under
the act-and-wait-fashioned controller can be analyzed by assessing
the Floquet multipliers that are not $1$. \end{remark} 

We analyze the stability of the periodic orbit $\mathcal{O}\subset\mathbb{R}^{n}$
under the act-and-wait-fashioned control input by assessing the monodromy
matrix for the linear variational equation associated with the closed-loop
dynamics \eqref{eq:control-input}, \eqref{eq:nonl-system}. Specifically,
we linearize the closed-loop system \eqref{eq:control-input}, \eqref{eq:nonl-system}
around the periodic trajectory $x^{*}(t)$. First, we write the solution
of \eqref{eq:control-input}, \eqref{eq:nonl-system} as 
\begin{equation}
x(t)=x^{*}(t)+\delta x(t),\label{eq:variation}
\end{equation}
where $\delta x(t)$ is the state deviation from the periodic solution
$x^{*}(t)$ at time $t$. It then follows from \eqref{eq:nonl-system}
and \eqref{eq:variation} that 
\begin{align}
 & \dot{x}^{*}(t)+\dot{\delta x}(t)\nonumber \\
 & \quad=f(x^{*}(t)+\delta x(t))\nonumber \\
 & \quad\quad-g(t)F(x^{*}(t)+\delta x(t)-x^{*}(t-T)-\delta x(t-T))\nonumber \\
 & \quad=f(x^{*}(t))+{\left.\frac{\partial f(x)}{\partial x}\right|}_{x=x^{*}(t)}\delta x(t)+H.O.T.\nonumber \\
 & \quad\quad-g(t)F(\delta x(t)-\delta x(t-T)),\label{eq:hot}
\end{align}
where $H.O.T.$ denotes the higher-order terms in $\delta x$. 

By using \eqref{eq:upoproperty2} and neglecting the higher-order
terms for infinitely small deviations in \eqref{eq:hot}, we obtain
the linear variational equation
\begin{equation}
\dot{\delta x}(t)=A(t)\delta x(t)-g(t)F(\delta x(t)-\delta x(t-T)),\label{eq:variationalequation}
\end{equation}
where 
\begin{equation}
A(t)\triangleq{\left.\frac{\partial f(x)}{\partial x}\right|}_{x=x^{*}(t)}.\label{eq:defofat}
\end{equation}
Since $x^{*}(t)$ is a $T$-periodic trajectory, it follows from \eqref{eq:defofat}
that $A(t)$ is also $T$-periodic, that is, $A(t)=A(t+T)$. Hence,
the linear variational dynamics \eqref{eq:variationalequation} is
in fact characterized by the linear periodic system \eqref{eq:system}
(with $B(t)=I_{n}$) under the act-and-wait-fashioned delayed feedback
controller \eqref{eq:control-input}. Consequently, the monodromy
matrix $\Lambda$ associated with the linear variational dynamics
\eqref{eq:variationalequation} can be obtained by using \eqref{eq:monodromymatrix}. 

Next, we show that $\dot{x}^{*}(t)$ is a periodic solution of the
linear variational dynamics. First, by using \eqref{eq:nonl-system}
with $u(t)\equiv0$, we obtain 
\begin{align}
\frac{\mathrm{d}\dot{x}(t)}{\mathrm{d}t} & =\frac{\partial f(x(t))}{\partial x(t)}\frac{\mathrm{d}x(t)}{\mathrm{d}t}.\label{eq:derivatives}
\end{align}
Hence, for $x(t)\equiv x^{*}(t)$, it follows from \eqref{eq:defofat}
and \eqref{eq:derivatives} that 
\begin{align}
\frac{\mathrm{d}\dot{x}^{*}(t)}{\mathrm{d}t} & =A(t)\dot{x}^{*}(t).
\end{align}
Thus, the linear variational equation characterized by \eqref{eq:variationalequation}
has $\dot{x}^{*}(t)$ as a solution, that is, $\delta x(t)\equiv\dot{x}^{*}(t)$
satisfies \eqref{eq:variationalequation}. To see that $\dot{x}^{*}(t)$
is $T$-periodic, note that for $x(t)\equiv x^{*}(t)$, we have $u(t)\equiv0$,
and hence, \eqref{eq:upo} and \eqref{eq:upoproperty2} imply $\dot{x}^{*}(t+T)=f(x^{*}(t+T))=f(x^{*}(t))=\dot{x}^{*}(t)$,
$t\geq0$. 

Note that $\dot{x}^{*}(T)=\Phi(T,0)\dot{x}^{*}(0)$, where $\Phi(T,0)$
is the monodromy matrix for the uncontrolled system. Since, $\dot{x}^{*}(T)=\dot{x}^{*}(0)$,
it follows that $\dot{x}^{*}(0)$ is an eigenvector of the monodromy
matrix $\Phi(T,0)$ corresponding to the eigenvalue $1$. Note also
that
\begin{align}
\dot{x}^{*}(0) & =\dot{x}^{*}(2T)=\Lambda\dot{x}^{*}(0),\label{eq:nonlinearmonodromy}
\end{align}
where $\Lambda$ is the monodromy matrix associated with the closed-loop
linear variational dynamics \eqref{eq:variationalequation}. It follows
from \eqref{eq:nonlinearmonodromy} that $\Lambda$ possesses $1$
as an eigenvalue with the corresponding eigenspace $\{\lambda\dot{x}^{*}(0)\colon\lambda\in\mathbb{C}\}$
regardless of the choice of feedback gain matrix $F$ in the control
law \eqref{eq:control-input}. 

Note that the eigenvalues of the monodromy matrix $\Lambda$ that
are not associated with the eigenspace $\{\lambda\dot{x}^{*}(0)\colon\lambda\in\mathbb{C}\}$
correspond to the eigenvalues of a linearized Poincar\'e map, which
characterize the local asymptotic stability of the periodic orbit
$\mathcal{O}$ of the nonlinear system \eqref{eq:nonl-system} (see
\cite{guckenheimer2002}). 

\begin{remark} \label{RemarkForNonlinearStability}

Following the approach presented in \cite{chicone2006} and \cite{meirovitch2003},
we characterize the asymptotic stability of the periodic orbit $x^{*}(\cdot)$
of the closed-loop nonlinear system \eqref{eq:control-input}, \eqref{eq:nonl-system},
by assessing the spectrum of the monodromy matrix associated with
the linear variational dynamics \eqref{eq:variationalequation}. In
particular, the act-and-wait-fashioned Pyragas-type delayed feedback
control law \eqref{eq:control-input} asymptotically stabilizes the
periodic orbit $\mathcal{O}$ of \eqref{eq:nonl-system} if all the
eigenvalues, other than the eigenvalue $1$ associated with the eigenspace
$\{\lambda\dot{x}^{*}(0)\colon\lambda\in\mathbb{C}\}$, of the monodromy
matrix $\Lambda$ of the linear variational equation \eqref{eq:variationalequation}
are strictly inside the unit circle of the complex plane. \end{remark}

Note that the monodromy matrix of the linear variational equation
\eqref{eq:variationalequation} can be calculated using \eqref{eq:monodromymatrix}
with $B(t)=I_{n}$. For the periodic matrix $A(t)\in\mathbb{R}^{n\times n}$
and a feedback gain matrix $F\in\mathbb{R}^{m\times n}$, numerical
methods can be employed to calculate the eigenvalues of the monodromy
matrix $\Lambda$, which determine the asymptotic stability. Note
also that the conditions on the monodromy matrix of the linear variational
equation is only enough to guarantee local stability of the periodic
orbit of the nonlinear system \eqref{eq:control-input}, \eqref{eq:nonl-system}
(see Chapter 7 of \cite{leine2013dynamics}). For obtaining global
stability results, the higher-order terms in \eqref{eq:hot} have
to be taken into consideration.

\section{Illustrative Numerical Examples}

\label{sec:Illustrative-Numerical-Example}

In this section, we provide two numerical examples to demonstrate
our main results. 

\begin{figure}[t]
\begin{center}\psfrag{x1}[bc]{\hspace{0.2cm}\footnotesize{$x^{*}_{1}(t)$}}
\psfrag{x2}[bc]{\footnotesize{$x^{*}_{2}(t)$}}
\psfrag{tt}[c]{\footnotesize{$\mathrm{Time}\, [t]$}}\includegraphics[width=0.9\columnwidth]{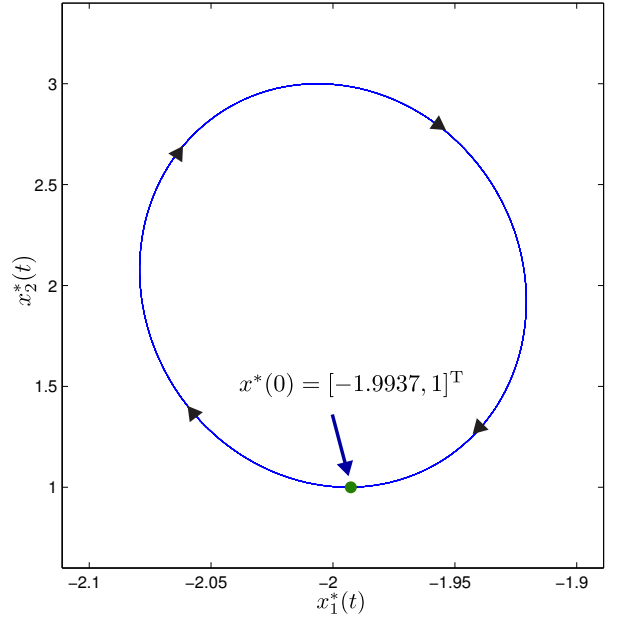}\end{center}\vskip -3pt\caption{Trajectory of the $T$-periodic solution $x^{*}(t)$ given by \eqref{eq:xstareq}}
\label{Flo:xstar}
\end{figure}

\begin{figure}[t]
\begin{center}\psfrag{x1}[bc]{\hspace{0.2cm}\footnotesize{$x_{1}(t)$}}
\psfrag{x2}[bc]{\footnotesize{$x_{2}(t)$}}
\psfrag{tt}[c]{\footnotesize{$\mathrm{Time}\, [t]$}}\includegraphics[width=0.9\columnwidth]{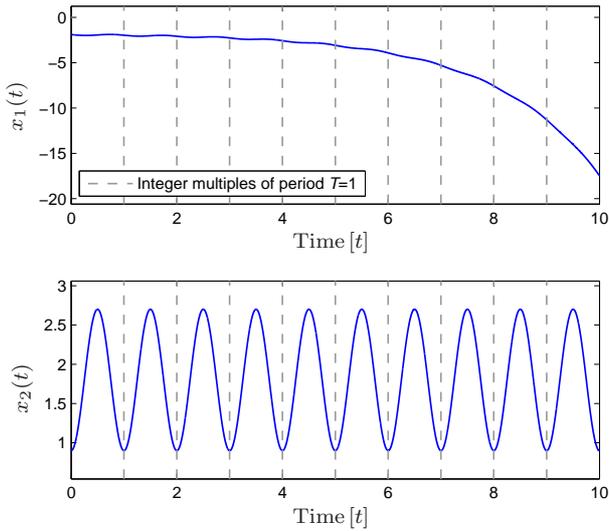}\end{center}\vskip -3pt\caption{State trajectories of the uncontrolled ($u(t)\equiv0$) system \eqref{eq:system}
for the initial condition $x_{0}=\left[-1.9,\,0.9\right]^{\mathrm{T}}$}
\label{Flo:x1-1}
\end{figure}

\begin{figure}
\begin{center}\includegraphics[width=0.9\columnwidth]{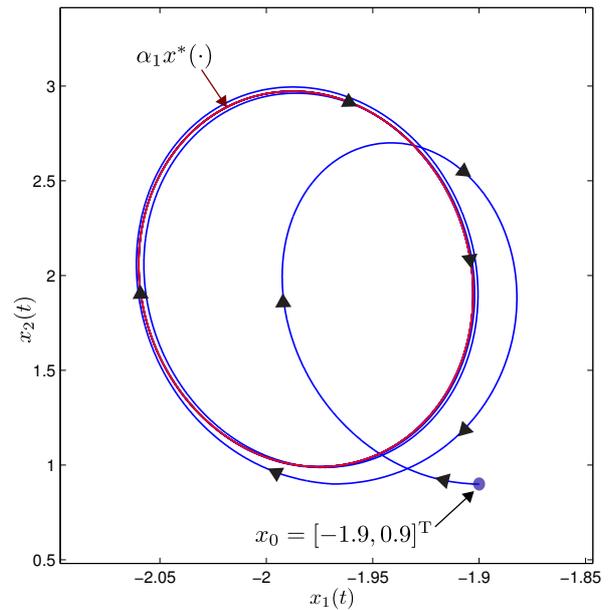}\end{center}\vskip -3pt\caption{Phase portrait of the closed-loop system \eqref{eq:system}, \eqref{eq:control-input}
obtained with the initial condition $x_{0}=\left[-1.9,\,0.9\right]^{\mathrm{T}}$ }
\label{Flo:ppclosedloop1}
\end{figure}

\begin{figure}
\begin{center}\psfrag{x1}[bc]{\hspace{0.2cm}\footnotesize{$x_{1}(t)$}}
\psfrag{x2}[bc]{\footnotesize{$x_{2}(t)$}}
\psfrag{x1ttt}{\scriptsize{$x_{1}(t)$}}
\psfrag{x2ttt}{\scriptsize{$x_{2}(t)$}}
\psfrag{a1xstar1ttt}{\scriptsize{$\alpha_{1}x^{*}_{1}(t)$}}
\psfrag{a1xstar2ttt}{\scriptsize{$\alpha_{1}x^{*}_{2}(t)$}}
\psfrag{tt}[c]{\footnotesize{$\mathrm{Time}\, [t]$}}\includegraphics[width=0.9\columnwidth]{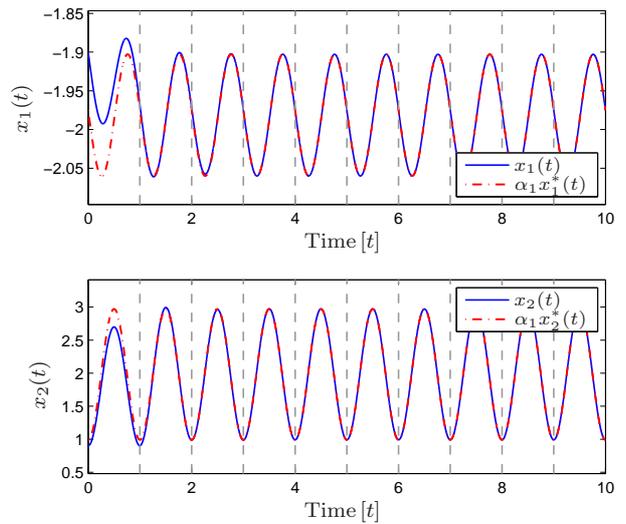}\end{center}\vskip -3pt\caption{State trajectories of the closed-loop system \eqref{eq:system}, \eqref{eq:control-input}
obtained with the initial condition $x_{0}=\left[-1.9,\,0.9\right]^{\mathrm{T}}$ }
\label{Flo:stclosedloop1}
\end{figure}

\begin{figure}
\begin{center}\psfrag{uu}[c]{\footnotesize{$u(t)$}}
\psfrag{tt}[c]{\footnotesize{$\mathrm{Time}\, [t]$}}\includegraphics[width=0.9\columnwidth]{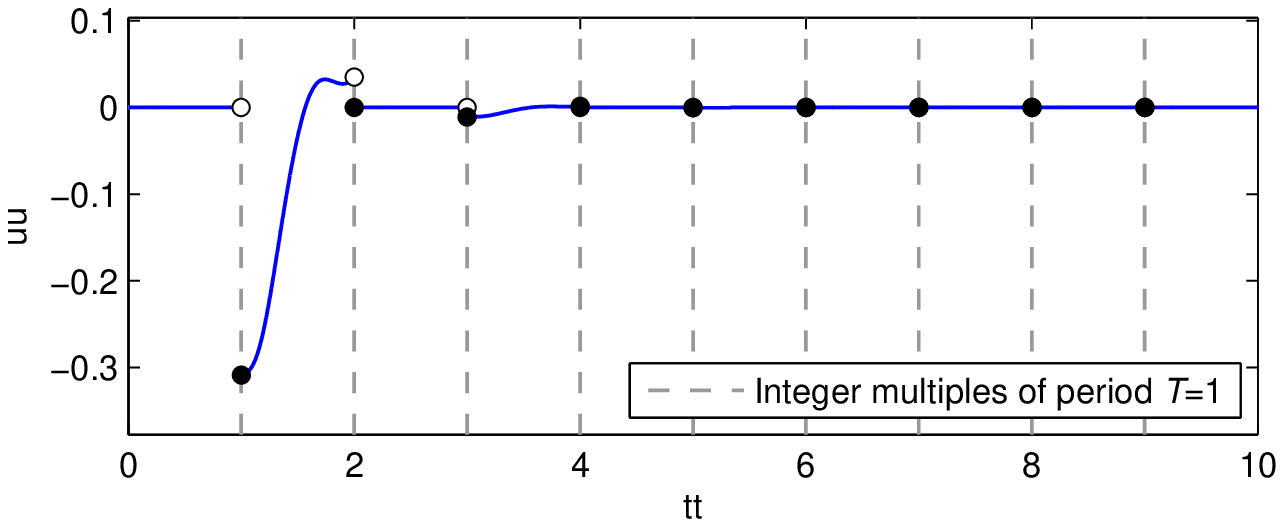}\end{center}\vskip -3pt\caption{Control input versus time }
\label{Flo:u1}
\end{figure}

\textbf{Example 4.1 } Consider the linear time-varying periodic
system \eqref{eq:system} described by periodic matrices 
\begin{align*}
A(t) & \triangleq\left[\begin{array}{cc}
0.5 & 0.5\\
0 & \frac{2\pi\sin(2\pi t)}{2-\cos(2\pi t)}
\end{array}\right],\,\,\,B(t)\triangleq\left[\begin{array}{c}
0\\
1+\sin^{2}(2\pi t)
\end{array}\right].
\end{align*}
The period of the time-varying system is $T=1$, that is, $A(t+1)=A(t)$
and $B(t+1)=B(t)$. The uncontrolled ($u(t)\equiv0$) dynamics possess
a $T$-periodic solution $x(t)\equiv x^{*}(t)$ given by 
\begin{align}
 & x^{*}(t)\triangleq\left[\begin{array}{c}
\frac{1}{(1+16\pi^{2})}(\cos(2\pi t)-4\pi\sin(2\pi t))-2\\
2-\cos(2\pi t)
\end{array}\right].\label{eq:xstareq}
\end{align}

Note that $x(t)\equiv\alpha x^{*}(t)$, where $\alpha\in\mathbb{R}$,
is also a $T$-periodic solution of the uncontrolled system, that
is, $x(t)\equiv\alpha x^{*}(t)$ satisfies \eqref{eq:system} with
$u(t)\equiv0$. Fig.~\ref{Flo:xstar} shows the trajectory of the
periodic solution $x(t)\equiv x^{*}(t)$. 

The monodromy matrix associated with the uncontrolled system is given
by 
\begin{align}
\Phi(T,0) & =\left[\begin{array}{cc}
1.6487 & 1.2934\\
0 & 1
\end{array}\right],
\end{align}
 which has the eigenvalues $1.6487$ and $1$. Note that the eigenvalue
$1.6487$ of the monodromy matrix $\Phi(T,0)$ lies outside the unit
circle. The periodic system without control input, hence, shows unstable
behavior (see Figure~\ref{Flo:x1-1} for state trajectories obtained
for the initial condition $x_{0}=\left[-1.9,\,0.9\right]^{\mathrm{T}}$). 

We are interested in finding a feedback gain matrix $F\in\mathbb{R}^{1\times2}$
such that the delayed-feedback control characterized in \eqref{eq:control-input}
guarantees convergence of the state trajectory towards a periodic
solution. In order to evaluate the asymptotic behavior of solutions
under the control law \eqref{eq:control-input} we need to examine
the eigenvalues of the monodromy matrix $\Lambda$ associated with
the closed-loop system. It is difficult to find an analytical expression
for the monodromy matrix $\Lambda$. For that reason, we numerically
calculate the value of $\Lambda$ for a certain range of feedback
gain parameters and search for a feedback gain matrix $F\in\mathbb{R}^{1\times2}$
that satisfies the condition in Theorem~\ref{theo1}. Note that for
the feedback gain matrix $F=[4.5,\,\,0.6]$, the corresponding monodromy
matrix is given by 
\begin{align}
\Lambda & =\left[\begin{array}{cc}
1.6857 & 1.3671\\
-0.8273 & -0.6495
\end{array}\right],\label{eq:phinewexample}
\end{align}
which has the eigenvalues $\lambda_{1}=1$ and $\lambda_{2}=0.0362$
associated with the eigenvectors $v_{1}=x^{*}(0)=[-1.9937,\,1]^{\mathrm{T}}$
and $v_{2}=[-0.6381,\,0.7699]^{\mathrm{T}}$, respectively. Note that
the eigenvalue $\lambda_{2}$ is inside the unit circle of the complex
plane. Therefore, it follows from Theorem~\ref{theo1} that under
the control law \eqref{eq:control-input}, the state trajectory evaluated
at integer multiples of the doubled period $2T$ converges to $\alpha_{1}v_{1}$,
where $\alpha_{1}v_{1}$ represents the component of a given initial
condition $x_{0}$ along the eigenvector $v_{1}=x^{*}(0)$. Figures~\ref{Flo:ppclosedloop1}
and \ref{Flo:stclosedloop1} respectively show the phase portrait
and state trajectories of the closed-loop system \eqref{eq:system},
\eqref{eq:control-input} for the initial condition $x_{0}=[-1.9,\,0.9]^{\mathrm{T}}$,
which can be represented as $x_{0}=\alpha_{1}v_{1}+\alpha_{2}v_{2}$
where $\alpha_{1}=0.9907$ and $\alpha_{2}=-0.1178$. Note  that $\lim_{k\to\infty}x(2kT)=\alpha_{1}v_{1}=\alpha_{1}x^{*}(0)$
and hence the state trajectory converges to the periodic solution
$\alpha_{1}x^{*}(t)$. Note that the convergence is achieved with
the help of the proposed delayed feedback controller, which is turned
on and off alternately at every integer multiples of the period $T=1$
of the uncontrolled system, and hence the control input (shown in
Figure~\ref{Flo:u1}) is discontinuous at time instants $T$, $2T$,
$3T$, $\ldots$. Note also that the control input converges to $0$
as the state converges to the periodic solution. 
\begin{figure}
\begin{center}\includegraphics[width=0.9\columnwidth]{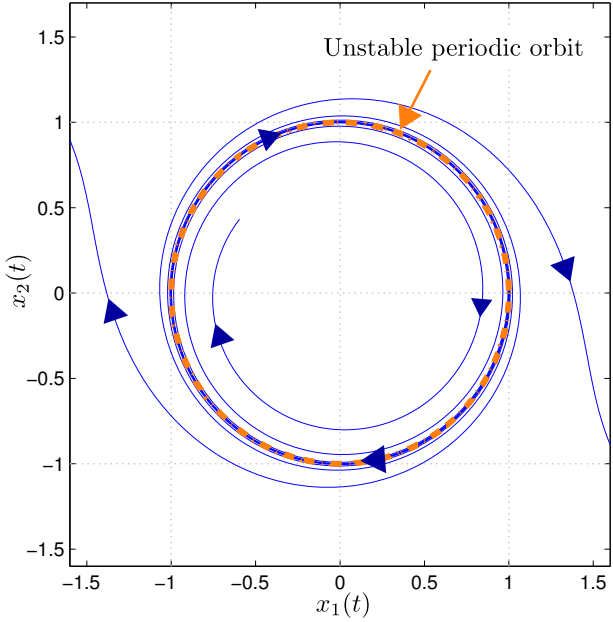}\end{center}
\vskip -10pt\caption{Phase portrait of the uncontrolled ($u(t)\equiv0$) nonlinear system
\eqref{eq:nonl-system} }
\label{unstablephaseportrait}
\end{figure}

\textbf{Example 4.2 } In this example we demonstrate the utility
of our proposed control framework for the stabilization of an unstable
periodic orbit of a nonlinear system. Specifically, we consider the
nonlinear dynamical system \eqref{eq:nonl-system} with 
\begin{align}
f(x) & \triangleq\left[\begin{array}{c}
-x_{1}\big(\varphi(x)-\varphi^{2}(x)\big)+2\pi x_{2}\\
-x_{2}\big(\varphi(x)-\varphi^{2}(x)\big)-2\pi x_{1}
\end{array}\right],\label{eq:examplef}
\end{align}
 where $\varphi(x)\triangleq x_{1}^{2}+x_{2}^{2}$. This system is
a modified version of an example nonlinear dynamical system considered
in Section 2.7 of \cite{khalilbook}. The phase portrait of the uncontrolled
($u(t)\equiv0$) nonlinear system \eqref{eq:nonl-system} is shown
in Figure \ref{unstablephaseportrait}. The system has clockwise-revolving
unstable periodic orbit $\mathcal{O}\triangleq\{x^{*}(t)\colon t\in[0,T)\}$,
where $x^{*}(t)=[\cos2\pi t,\,-\sin2\pi t]^{\mathrm{T}}$ is a $1$-periodic
solution of the uncontrolled system. The linear variational equation
associated with the closed-loop system \eqref{eq:nonl-system} under
our proposed controller \eqref{eq:control-input} (with $T=1$) is
given by \eqref{eq:variationalequation}, where 
\begin{align}
A(t) & =\left[\begin{array}{cc}
a_{1,1}(t) & a_{1,2}(t)\\
a_{2,1}(t) & a_{2,2}(t)
\end{array}\right]
\end{align}
 with 
\begin{align}
a_{1,1}(t) & =2\cos^{2}(2\pi t),\\
a_{1,2}(t) & =-2\cos(2\pi t)\sin(2\pi t)+2\pi,\\
a_{2,1}(t) & =-2\cos(2\pi t)\sin(2\pi t)-2\pi,\\
a_{2,2}(t) & =2\sin^{2}(2\pi t).
\end{align}
Note that it is difficult to find an analytical expression for the
monodromy matrix $\Lambda$ associated with the linear variational
equation \eqref{eq:variationalequation}. For this reason, we numerically
calculate the value of $\Lambda$ for a certain range of the elements
of the gain matrix. In particular, for the case 
\begin{align}
F & =\left[\begin{array}{cc}
0.7 & \,\,4.1\\
-4.1 & \,\,0.7
\end{array}\right],\label{eq:exampleF}
\end{align}
 the corresponding monodromy matrix is given by 
\begin{align}
\Lambda & =\left[\begin{array}{cc}
-0.0093 & \,\,0\\
-6.5898 & \,\,1
\end{array}\right],
\end{align}
 which has the eigenvalues $\lambda_{1}=-0.0093$, $\lambda_{2}=1$,
and eigenvectors $v_{1}=[-0.1514,\,-0.9885]^{\mathrm{T}}$, $v_{2}=[0,\,-1]^{\mathrm{T}}$.
Note that the eigenvalue $\lambda_{1}$ (eigenvalue that is not $1$)
is inside the unit circle. Therefore, the feedback control law \eqref{eq:control-input}
with the feedback gain matrix \eqref{eq:exampleF} asymptotically
stabilizes the periodic orbit $x^{*}(\cdot)$ of the nonlinear system
\eqref{eq:nonl-system} (see Remark~\ref{RemarkForNonlinearStability}). 

\begin{figure}
\begin{center}\includegraphics[width=0.9\columnwidth]{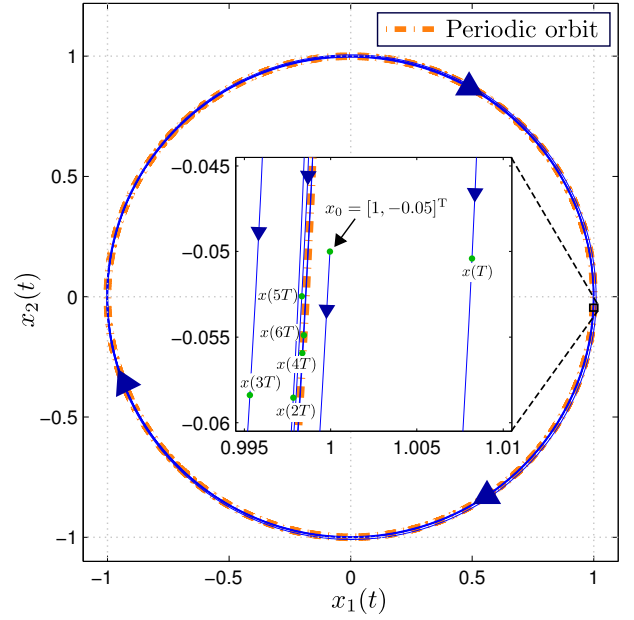}\end{center}
\vskip -10pt\caption{Phase portrait of the closed-loop system \eqref{eq:control-input},
\eqref{eq:nonl-system}. Inset figure shows the initial condition
$x_{0}=[1,-0.05]^{\mathrm{T}}$ as well as the state evaluated at
integer multiples of $T$. }
\label{stablephaseportrait}
\end{figure}

Figures~\ref{stablephaseportrait} and \ref{statemagexample} respectively
show the phase portrait and the state magnitude of the closed-loop
system \eqref{eq:control-input}, \eqref{eq:nonl-system} obtained
for the initial condition $x_{0}=[1,-0.05]^{\mathrm{T}}$. The state
trajectory converges to the periodic orbit and the state magnitude
$\|x(t)\|$ converges to the desired value $1$. 

Note that in this paper, we investigate local stability of the periodic
orbit of the nonlinear system through a linearization approach. The
initial condition $x_{0}$ in the simulation is selected close to
$x^{*}(0)$, so that $\delta x(t)$ (the state deviation from the
periodic solution) remains small and the effect of higher-order terms
in the variational equation \eqref{eq:hot} is negligible. Note that,
if the initial state is far from the orbit, then the control law \eqref{eq:control-input}
with the feedback gain $F$ may no longer achieve stabilization, since
the higher order terms in the variational equation may have strong
effects on the state trajectory. For achieving global stabilization,
the higher-order terms in \eqref{eq:hot} have to be taken into consideration
for control design. 

\begin{figure}
\begin{center}\psfrag{magx}[bc]{\footnotesize{$\| x(t) \|$}}
\psfrag{tt}[c]{\footnotesize{$\mathrm{Time}\, [t]$}}\includegraphics[width=0.9\columnwidth]{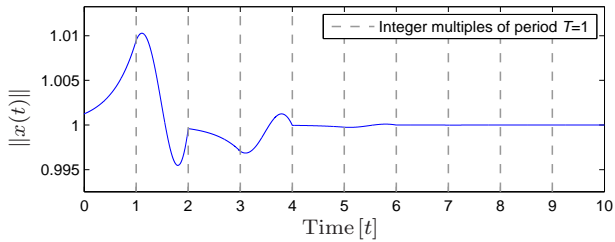}\end{center} \vskip -10pt\caption{State magnitude versus time }
\label{statemagexample}
\end{figure}

\begin{figure}
\begin{center}\psfrag{uu1}[c]{\footnotesize{$u_{1}(t)$}}
\psfrag{uu2}[c]{\footnotesize{$u_{2}(t)$}}
\psfrag{tt}[c]{\footnotesize{$\mathrm{Time}\, [t]$}}\includegraphics[width=0.9\columnwidth]{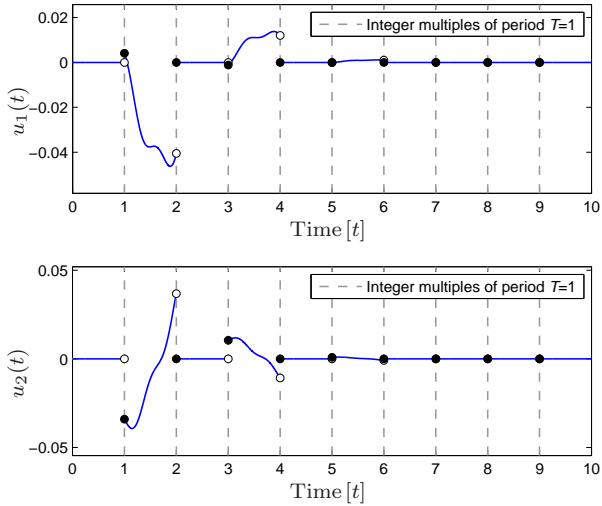}\end{center}\vskip -10pt\caption{Control input versus time }
\label{Flo:u2}
\end{figure}

The control input trajectory (shown in Figure~\ref{Flo:u2}) is discontinuous
at time instants $T,2T,3T,\ldots$. At these time instants, the delayed-feedback
controller is turned on and off alternately according to the switching
function $g(t)$ defined in \eqref{eq:gdef}. For the $2T$-periodic
switching function $g(t)$, the closed-loop system is $2T$-periodic;
hence, the stabilization of the UPO could be characterized through
the monodromy matrix $\Lambda$ of the $2T$-periodic closed-loop
linear variational equation \eqref{eq:variationalequation}. 

Note that stabilization of the UPO could also be achieved through
utilization of alternative switching sequences that are different
from the one induced by $g(t)$. For example, one can consider the
switching function 
\begin{align}
\hat{g}(t) & \triangleq\left\{ \begin{array}{rl}
0, & 3kT\leq t<(3k+1)T,\\
1, & (3k+1)T\leq t<3(k+1)T,
\end{array}\right.\,k\in\mathbb{N}_{0}.\label{eq:gtildedef}
\end{align}
In this case the act-and-wait fashioned control law is given by 
\begin{align}
u(t) & \triangleq-\hat{g}(t)F(x(t)-x(t-T)).\label{eq:uwithghat}
\end{align}
Note that the closed-loop system under the controller \eqref{eq:uwithghat}
is $3T$-periodic, thus by analyzing the spectrum of the monodromy
matrix associated with the $3T$-periodic closed-loop linear variational
equation, we can assess whether the control law \eqref{eq:uwithghat}
guarantees local asymptotic stabilization of the UPO or not. 

According to the switching sequence induced by $\hat{g}(t)$, the
controller is active two-thirds of the time in average. On the other
hand, in the case of $g(t)$ defined in \eqref{eq:gdef}, the controller
is active only half of the time. Hence, one may think that a feedback
gain matrix that guarantees stabilization of the UPO for the case
of $g(t)$ guarantees stabilization also for the case of $\hat{g}(t)$.
However, this is not true. A feedback gain that guarantees stabilization
for one switching sequence does not necessarily guarantee stabilization
for another. For instance, the control input \eqref{eq:control-input}
with the gain matrix $F$ given by \eqref{eq:exampleF} achieves stabilization
(as illustrated in Figures \ref{stablephaseportrait} and \ref{statemagexample}),
whereas the control input \eqref{eq:uwithghat} with the same gain
matrix $F$ does not stabilize the UPO. In fact, the monodromy matrix
of the $3T$-periodic closed-loop linear variational equation under
the control law \eqref{eq:uwithghat} possesses the eigenvalue $-25.9876$,
which is outside the unit circle. 

In addition to utilizing a different switching sequence for the act-and-wait
controller, we may also employ a different delay term for the feedback
for stabilizing the UPO. For example, consider the act-and-wait fashioned
control law 
\begin{align}
u(t) & \triangleq-\bar{g}(t)F(x(t)-x(t-2T)),\label{eq:ubar}
\end{align}
 with the switching function 
\begin{align}
\bar{g}(t) & \triangleq\left\{ \begin{array}{rl}
0, & 4kT\leq t<(4k+2)T,\\
1, & (4k+2)T\leq t<4(k+1)T,
\end{array}\right.\,k\in\mathbb{N}_{0}.\label{eq:gbardef}
\end{align}
 Note that in this case the control law \eqref{eq:ubar} harnesses
the difference between the state at time $t$ (current state) and
the state at time $t-2T$. Furthermore, with the switching function
$\bar{g}(t)$, the controller is turned on and off alternately at
time instants $2T,4T,6T,\ldots$. The closed-loop system under the
control law \eqref{eq:ubar} is $4T$-periodic. It is important to
note here that the control input \eqref{eq:ubar} with a feedback
gain $F$ does not necessarily achieve stabilization of the UPO, even
if the control law \eqref{eq:control-input} with the same feedback
gain guarantees stabilization. For example, with the feedback gain
\eqref{eq:exampleF}, the control law \eqref{eq:control-input} achieves
stabilization of the UPO, whereas the control law \eqref{eq:ubar}
does not. In fact, the monodromy matrix of the linear variational
equation associated with the $4T$-periodic closed-loop system under
the control law \eqref{eq:ubar} with the feedback gain \eqref{eq:exampleF}
has the eigenvalue $69.4489$, which is outside the unit circle. The
discussion above illustrates that it is possible to obtain different
control laws by changing the act-and-wait sequence and/or changing
the delayed-feedback term; furthermore, each case with a different
control law requires independent analysis for assessing asymptotic
stabilization of the UPO. 

In practical implementations of our act-and-wait-fashioned control
laws, the timing of the switching may not always be exact. Furthermore,
it may also be the case that the delay time and the period of the
orbit do not exactly match. The effects of such practical issues require
further analysis. We note that for the standard delayed feedback control,
the effect of mismatches between the delay time and the period of
the orbit was analyzed in \cite{purewal2014effect}. We also note
that although the trajectories of the act-and-wait-fashioned control
law has discontinuities, this does not cause a practical problem in
the form of chattering, because the switching in the control law happens
only periodically with a period that is an integer multiple of the
period of the orbit to be stabilized. 

\section{Conclusion}

\label{sec:Conclusion}

We explored stabilization of the periodic orbits of linear periodic
time-varying systems through an act-and-wait-fashioned delayed feedback
control framework. Our proposed framework employs a switching mechanism
to turn the control input on and off alternately at every integer
multiples of the period of the desired orbit. The use of this mechanism
allows us to derive the monodromy matrix associated with the closed-loop
dynamics. By analyzing the eigenvalues of the monodromy matrix, we
obtained conditions under which the state trajectory converges towards
a periodic solution. We then applied our results in stabilization
of unstable periodic orbits of nonlinear systems. We discussed alternative
switching sequences for turning the controller on and off in our control
framework. We observe that a feedback gain that guarantees stabilization
for one switching sequence does not necessarily guarantee stabilization
for another. It is also interesting to observe that increasing the
amount of time the control input is turned on does not necessarily
increase the performance; in fact it may result in instability. 

In the act-and-wait-fashioned control laws that we considered, the
waiting duration where the control input is turned off is larger than
or equal to the delay amount. It is shown in \cite{insperger2011}
that the act-and-wait approach can also be useful in obtaining finite-dimensional
monodromy matrices even if the waiting duration is smaller than the
delay. One of our future research directions is to investigate the
utility of small waiting and acting durations in the delayed feedback
control of periodic orbits. 

\balance

\section*{Acknowledgements}

This research was supported by the Aihara Project, the FIRST program
from JSPS, initiated by CSTP.

\bibliographystyle{elsarticle-num}
\bibliography{references}

\end{document}